\newtheorem{theorem}{Theorem}[section]
\newtheorem{meta-theorem}[theorem]{Meta-Theorem}
\newtheorem{claim}[theorem]{Claim}
\newtheorem{corollary}[theorem]{Corollary}
\newtheorem{definition}[theorem]{Definition}
\crefname{theorem}{Theorem}{Theorems}
\crefname{proposition}{Proposition}{Propositions}
\crefname{observation}{Observation}{Observations}
\crefname{lemma}{Lemma}{Lemmas}
\crefname{claim}{Claim}{Claims}
\crefname{problem}{Problem}{Problems}
\crefname{conjecture}{Conjecture}{Conjectures}
\crefname{question}{Question}{Questions}
\crefname{example}{Example}{Examples}
\crefname{fact}{Fact}{Facts}
\crefname{figure}{Figure}{Figures}
\crefname{table}{Table}{Tables}
\definecolor{darkgreen}{rgb}{0,0.5,0}
\algnewcommand\algorithmicswitch{\textbf{switch}}
\algnewcommand\algorithmiccase{\textbf{case}}
\newcommand{\fC}{\mathcal{C}}
\definecolor{Gray}{gray}{0.9}
\newcommand{\congest}{$\mathsf{CONGEST}\,$}
\newcommand{\local}{$\mathsf{LOCAL}\,$}
\newcommand{\poly}{\operatorname{poly}}
\renewcommand{\phi}{\varphi}
\renewcommand{\paragraph}[1]{\vspace{0.15cm}\noindent {\bf #1}:}
\newcommand{\FullOrShort}{full}
  \newcommand{\fullOnly}[1]{#1}
  \newcommand{\shortOnly}[1]{}
    \newcommand{\fullOnly}[1]{}
    \newcommand{\IncludePictures}[1]{}
\title{A Simple Deterministic Distributed \\
Low-Diameter Clustering}
\author{
Václav Rozhoň \thanks{\scriptsize{Supported by the European Research Council (ERC) under the European Unions Horizon 2020 research and innovation programme (grant agreement No.~853109).}} \\
\small ETH Zurich \\
\small rozhonv@inf.ethz.ch\\
\and \textcircled{r}\footnote{The author ordering was randomized using \url{https://www.aeaweb.org/journals/policies/random-author-order/generator}. 
 It is requested that citations of this work list the authors separated by \texttt{\textbackslash textcircled\{r\}} instead of commas} 
 \and
Bernhard Haeupler \thanks{\scriptsize{Supported in part by NSF grants CCF-1814603, CCF-1910588, NSF CAREER award CCF-1750808, a Sloan Research Fellowship, funding from the European Research Council (ERC) under the European Union's Horizon 2020 research and innovation program (ERC grant agreement 949272), and the Swiss National Foundation (project grant 200021-184735).}}\\
\small Carnegie Mellon University \& ETH Zurich\\
\small bernhard.haeupler@inf.ethz.ch\\
\and \textcircled{r} \and
Christoph Grunau \footnotemark[1]\\
\small ETH Zurich \\
\small cgrunau@inf.ethz.ch\\
}
\begin{document}

\maketitle

\begin{abstract}
We give a simple, local process for nodes in an undirected graph to form non-adjacent clusters that (1) have at most a polylogarithmic diameter and (2) contain at least half of all vertices.

\smallskip
Efficient\footnote{With efficient we mean polylogarithmic rounds in \congest~\cite{peleg00}, i.e., the standard model for distributed message-passing algorithms.} deterministic distributed clustering algorithms for computing strong-diameter network decompositions and other key tools follow immediately.
%Simple deterministic distributed CONGEST algorithms with polylogarithmic running times for computing such clusterings, strong-diameter network decompositions, and other key tools follow immediately.
%We prove no quantitatively new results but  
Overall, our process is a direct and drastically simplified way for computing these fundamental objects.%, where previous constructions are more complicated as they go through constructing intermediate objects.%, such as weak-diameter clusterings in which each cluster comes with a low-diameter steiner-tree such that the collection of steiner trees has low vertex- or edge-congestion. 
\end{abstract}

\iffalse
A $(C,D)$-network decomposition of a graph partitions the vertices into $C$ disjoint vertex sets, each inducing a graph in which connected components have diameter at most $D$. 

%A $(C,D)$-network decomposition of a graph is a coloring of its vertices with $C$ colors such that the diameter of each connected component of each color class is at most $D$. 

Network decomposition is a general and widely-used tool for designing efficient distributed algorithms. Recently, [Rozhon, Ghaffari STOC 2020] gave the first deterministic distributed network decomposition algorithm with $\poly(\log n)$ round complexity and $C,D = \poly(\log n)$. 
This implied the first $\poly(\log n)$ round distributed deterministic algorithms for fundamental symmetry breaking problems like maximal independent set and $\Delta+1$ vertex coloring. 

A downside to their algorithm is that it provides only a so-called weak-diameter guarantee -- each connected component of a color class $C$ has only the property that any two nodes $u,v \in C$ are close in $G$, not in $G[C]$. The latter desirable property is also known as the strong-diameter guarantee. 

Follow-up works of [Chang, Ghaffari PODC 2021], [Elkin, Haeupler, Rozhon, Grunau FOCS 2022], [Ghaffari, Grunau, Haeupler, Ilchi, Rozhon] recovered the strong-diameter guarantee using additional ideas that arguably complicate the overall algorithm. 
In this paper, we show that there is in fact a simple way of achieving the strong-diameter guarantee by adjusting the original algorithm of Rozhon and Ghaffari. 
Our algorithm can even be seen as a simplification of their algorithm.

\fi

\section{Introduction}

This paper focuses on distributed graph algorithms, particularly on the fundamental problem of deterministic and local ways to compute network decompositions and \emph{low-diameter clusterings}, which cluster at least half of the nodes in a given graph into non-adjacent clusters with small diameter. In particular, the paper describes a drastically simplified efficient deterministic distributed construction for computing such a low-diameter clustering with polylogarithmic diameter in polylogarithmic rounds of the distributed \congest model. 

%network decompositions and low diameter clusterings (both with strong diameter guarantees) in the distributed \congest model. 

\medskip
%\vspace{.5cm}
%\paragraph{The quest for fast distributed deterministic algorithms}
Starting with the seminal work of Luby \cite{luby86_lubys_alg} from the 1980's, fast and simple $O(\log n)$-round \emph{randomized} distributed algorithms are known for many fundamental symmetry breaking problems like maximal independent set (MIS) or $\Delta+1$ vertex coloring. 
For a long time, this was in stark contrast with the state-of-the-art deterministic algorithms. 
For multiple decades, it was a major open problem in the area of distributed graph algorithms to get deterministic algorithms with round complexity $\poly \log (n)$ for such problems, e.g., MIS or $\Delta+1$ vertex coloring. A recent breakthrough of Rozhoň and Ghaffari  \cite{rozhon_ghaffari2019decomposition} managed to resolve this open problem. 

In their work, the authors presented the first polylogarithmic-round deterministic algorithm for network decompositions using a (weak-diameter version of) low-diameter clusterings. Network decomposition is the object we get by repeatedly finding a low diameter clustering and removing all the nodes in the clustering, until no node remains. See \cref{subsec:background} for the formal definitions. It was long known that low-diameter clusterings is the up-to-then-missing fundamental tool required for a large class of \local deterministic distributed algorithms. The clustering construction of \cite{rozhon_ghaffari2019decomposition} directly implied, among others, first efficient distributed algorithms for MIS (together with the work of \cite{censor2017deterministic_mis_congest}) and $\Delta+1$ vertex coloring (together with the work of \cite{bamberger2020efficient}) in the standard bandwidth-limited \congest model of distributed computing.

The main difference in the natural low-diameter clustering problem defined above and the weaker version solved in \cite{rozhon_ghaffari2019decomposition} is that clusters are not necessarily connected or induce a low low-diameter subgraph on their own but instead have low \emph{weak-diameter}. A cluster has weak-diameter at most $D$ if any two nodes in the cluster are connected by a path of length at most $D$ \emph{in the original graph $G$ instead of within the cluster itself}. Hence, a cluster may even be disconnected. While the weak-diameter guarantee is enough for derandomizing local computations without bandwidth limitations, including MIS and $\Delta+1$-coloring, the original -- strong-diameter -- clustering stated above is clearly the natural and right object to ask for: It is strictly stronger, easier to define, easier to use in applications, and requires less and simpler objects and notation. Indeed, in distributed models with bandwidth limitations, such as the standard \congest model in which message sizes are restricted, it is not sufficient that clusters have small weak-diameter but one also needs to guarantee that there exist so-called low-depth Steiner trees connecting the nodes of each cluster. The collection of these Steiner-trees must furthermore satisfy additional low-congestion guarantees, i.e., each edge or each node in the graph is not used by too many trees (as a Steiner node). Algorithms must also be able to compute the Steiner forest of a weak-diameter clustering efficiently. Lastly, there are several applications, e.g., low-stretch spanning trees, where strong-diameter clusterings are strictly required and the weak-diameter guarantee does not suffice \cite{elkin_haeupler_rozhon_gruanu2022Clusterings_LSST}. 
This motivated the later works of \cite{chang_ghaffari2021strong_diameter,elkin_haeupler_rozhon_gruanu2022Clusterings_LSST} to give low-diameter clustering algorithms with strong-diameter guarantees, typically first building a weak-diameter clustering and then using this weak-diameter clustering either for communication or using it as a starting point for building a strong-diameter clustering out of it recursively. This multi-step process still requires to define and maintain Steiner forests for weak-diameter clusterings during intermediate steps. 

In this work, we show that there is a much simpler and direct way to get strong-diameter guarantees by designing a natural clustering process that combines key ideas from \cite{rozhon_ghaffari2019decomposition} and \cite{elkin_haeupler_rozhon_gruanu2022Clusterings_LSST}.

\subsection{Preliminaries: Distributed \congest Model and Low-Diameter Clusterings}
\label{subsec:background}

We will now briefly introduce the standard model for distributed message-passing algorithms -- the \congest model of distributed computing~\cite{peleg00} and also give the definitions of clustering that we use (see \cite{elkin_haeupler_rozhon_gruanu2022Clusterings_LSST} for more discussion). 

\paragraph{\congest}
Throughout the paper, we work with the \congest model, which is the standard distributed message-passing model for graph algorithms~\cite{peleg00}. 
The network is abstracted as an $n$-node undirected graph $G=(V, E)$ where each node $v\in V$ corresponds to one processor in the network. Communications take place in synchronous rounds. Per round, each node sends one $O(\log n)$-bit message to each of its neighbors in $G$. 
We also consider the relaxed variant of the model where we allow unbounded message sizes, called \local. At the end of the round, each node performs some computations on the data it holds, before we proceed to the next communication round. 

We capture any graph problem in this model as follows: Initially, the network topology is not known to the nodes of the graph, except that each node $v\in V$ knows its own unique $O(\log n)$-bit identifier. It also knows a suitably tight (polynomial) upper bound on the number $n$ of nodes in the network. At the end of the computation, each node should know its own part of the output, e.g., in the graph coloring problem, each node should know its own color. %When we discuss a particular graph problem, we will specify what part of the output should be known by each node.

Whenever we say that there is ``an efficient distributed algorithm'', we mean that there is a \congest algorithm for the problem with round complexity $\poly(\log n)$. 

\paragraph{Low Diameter Clustering}
The main object of interest that we want to construct is a so-called \emph{low diameter clustering}, which we formally define after introducing a bit of notation. Throughout the whole paper we work with undirected unweighted graphs and write $G[U]$ for the subgraph of $G$ induced by $U \subseteq V(G)$. We use $d_G(u,v)$ to denote the distance of two nodes $u,v \in V(G)$ in $G$. We also simplify the notation to $d(u,v)$ when $G$ is clear from context and generalize it to sets by defining $d_G(U,W) = \min_{u \in U, w \in W} d_G(u,w)$ for $U,W \subseteq V(G)$. The diameter of $G$ is defined as $\max_{u,v \in V(G)} d_G(u,v)$. 

We use the term \emph{clustering of $G$} to denote any set of disjoint vertex subsets of $G$. A low diameter clustering is a clustering with additional properties: 

\begin{definition}[Low Diameter Clustering]
\label{def:low_diam_clustering}
A low diameter clustering $\fC$ with diameter $D$ of a graph $G$ is a clustering of $G$ such that:
\begin{enumerate}
    \item No two clusters $C_1 \not= C_2 \in \fC$ are adjacent in $G$, i.e., $d(C_1, C_2) \ge 2$. 
    \item For every cluster $C \in \fC$, the diameter of $G[C]$ is at most $D$.
\end{enumerate}
\end{definition}

Similarly, we define a low diameter clustering with weak-diameter at most $D$ by replacing the condition (2) with he requirement that for each cluster $C \in \fC$ and any two nodes $u,v \in C$ we have $d_G(u,v) \le D$. 

Whenever we construct a low diameter clustering, we additionally want it to cover as many nodes as possible. Usually, we want to cover at least half of the nodes of $G$, or formally, we require that $\left| \bigcup_{C \in \fC} C \right| \ge n/2$. 
Sometimes, it is also necessary to generalize (1) and require a larger separation of the clusters, but this is not considered in this paper. 

Let us now give a formal definition of network decomposition.

\begin{definition}[Network Decomposition]
\label{def:network_decomposition}
A network decomposition with $C$ colors and diameter $D$ is a coloring of nodes with colors $1, 2, \dots, C$ such that each color induces a low-diameter clustering of diameter $D$. 
\end{definition}

Notice that whenever we can construct a low-diameter clustering with diameter $D$ that covers at least $n/2$ nodes, we get a network decomposition by repeatedly constructing a low diameter clustering and removing it from the graph. This way, we achieve a network decomposition with $C = O(\log n)$ and diameter $D$. 
Since virtually all deterministic constructions of network decomposition work this way, we focus on constructing low-diameter clusterings from now on. 

The reason why network decomposition is a useful object is that it corresponds to the canonical way of using clusterings in distributed computing.
To give an example, we show how to use it to solve the maximal independent set problem in the less restrictive \local model. 

Given access to a network decomposition, we iterate over the $C$ color classes and gradually build independent sets $I_1 \subseteq I_2 \subseteq \dots \subseteq I_C$ where $I_C$ is maximal. 
In the $i$-th step, each cluster $K$ of the low-diameter clustering induced by the $i$-th color computes a maximal independent set in the graph induced by all the nodes in $K$ that are not neighboring a node in $I_{i-1}$ and we define $I_i$ by adding these independent sets to $I_{i-1}$. The set $I_C$ is clearly maximal. 
Computing the maximal independent set inside one cluster $K$ can be done in $O(D)$ rounds of the \local model as follows: One node of the cluster collects all the information about $G[K]$ and its neighborhood in $G$, then locally computes a maximal independent set, and afterwards broadcasts the solution to the nodes in the cluster.  Hence, the overall algorithm has round complexity $O(CD)$. Hence, given a network decomposition with $C,D = \poly(\log n)$, one can compute a maximal independent set in $\poly(\log n)$ rounds. 
Note that this brute-force approach for computing a maximal independent set critically relies on the fact that the \local model does not restrict the size of messages.

In the more restrictive \congest model, computing a maximal independent set inside a low diameter cluster becomes nontrivial, but one can use the deterministic MIS algorithm of \cite{censor2017deterministic_mis_congest} with round complexity $O(D \cdot \poly(\log n))$ where $D$ is the diameter of the input graph.

\subsection{Comparison with Previous Work}

We summarize the work on deterministic distributed low-diameter clusterings in the \congest model in \cref{table:papers}. 
%Two most important properties of the clusterings are how many nodes are clustered (second column) and the diameter of clusters (third column). 

\begin{table}[ht]
\centering
\begin{tabular}{||c | p{2.7cm} | c | p{2cm} |  c||} 

 \hline
 Paper & Fraction of clustered nodes & Diameter of clusters  & Strong diameter? &  round complexity \\ [0.5ex] 
 \hline\hline
 \cite{awerbuch89} & $2^{- \Omega(\sqrt{\log n \log \log n})}$& $2^{O(\sqrt{\log n \log \log n})}$ & \checkmark & $2^{O(\sqrt{\log n \log \log n})}$ \\ 
  \hline
 \cite{ghaffari2019distributed_MIS_congest} & $2^{-\Omega(\sqrt{\log n})}$& $2^{O(\sqrt{\log n})}$ & \checkmark & $2^{O(\sqrt{\log n })}$ \\ 
  
  \rowcolor{Gray}
  \hline
\cite{rozhon_ghaffari2019decomposition} & $1/2$ & $O(\log^3 n)$ & $\times$ & $O(\log^7 n)$ \\ 
  \hline
  \cite{ghaffari_grunau_rozhon2020improved_network_decomposition} & $1/2$ & $O(\log^2 n)$ & $\times$ &  $O(\log^4 n)$ \\ 
  \hline
  \cite{chang_ghaffari2021strong_diameter} & $1/2$ & $O(\log^2 n)$ & \checkmark &  $O(\log^{10} n)$ \\ 
  \hline
    \cite{chang_ghaffari2021strong_diameter} & $1/2$ & $O(\log^3 n)$ & \checkmark &  $O(\log^7 n)$ \\ 
\rowcolor{Gray}  \hline
\cite{elkin_haeupler_rozhon_gruanu2022Clusterings_LSST} & $1/2$ & $O(\log^2 n)$ & \checkmark & $O(\log^4 n)$ \\ 
%  \hline
%\cite{elkin_haeupler_rozhon_gruanu2022Clusterings_LSST} & $1/2$ & $\poly\log(n)$ & \checkmark &  $\poly\log(n)$ \\ 
  \hline
\cite{ghaffari2022improved} & $\Omega(1 / \log\log n)$ & $O(\log n ) $ & \checkmark &  $\log^2(n) \cdot \poly(\log\log n)$ \\ 
  \hline
\cite{ghaffari2022improved} & $1/2$ & $O(\log n \, \cdot \, \log\log\log n) $ & \checkmark &  $\log^2(n) \cdot \poly(\log\log n)$ \\ 
\rowcolor{Gray}
  \hline
 this paper & $1/2$ & $O(\log^3 n) $ & \checkmark &  $\log^6(n)$ \\ 
  \hline
 
\end{tabular}

\caption{This table shows the previous work on distributed deterministic algorithms for low-diameter clusterings. We highlighted the three results relevant for this paper. }
 \label{table:papers}

\end{table}

There are three highlighted rows in the table, besides our result we highlight the work of \cite{rozhon_ghaffari2019decomposition} and \cite{elkin_haeupler_rozhon_gruanu2022Clusterings_LSST}; The algorithm of this paper combines ideas from both of these papers. 

Let us now go through the rows of the table. The first two rows, together with the related results of \cite{panconesi-srinivasan,ghaffari_portmann2019improved,ghaffari_kuhn2018derandomizing_spanners_dominatingsets} represent the results before the work of \cite{rozhon_ghaffari2019decomposition} and are not relevant to our paper. 

Next, there is the work of \cite{rozhon_ghaffari2019decomposition} and an improved variant of it by \cite{ghaffari_grunau_rozhon2020improved_network_decomposition}. These were the first deterministic efficient constructions of low diameter clusterings, however, they suffer from only providing a weak-diameter guarantee. 

Next, the work of \cite{chang_ghaffari2021strong_diameter} and \cite{elkin_haeupler_rozhon_gruanu2022Clusterings_LSST} use the algorithm of \cite{ghaffari_grunau_rozhon2020improved_network_decomposition} as a black blox and use additional ideas on top of the weak-diameter algorithm to create strong-diameter clusterings. The row with \cite{elkin_haeupler_rozhon_gruanu2022Clusterings_LSST} is highlighted because our algorithm uses an idea similar to theirs. 

Finally, a very recent algorithm of \cite{ghaffari2022improved} manages to bring down the diameter of the clusters as well as the round complexity, with a very different technique than \cite{rozhon_ghaffari2019decomposition}. 
However, their algorithm is very complicated. 

By far the simplest efficient algorithm from those in the table is the one from \cite{rozhon_ghaffari2019decomposition}. We show that with a small modification to their algorithm in the spirit of the algorithm of  \cite{elkin_haeupler_rozhon_gruanu2022Clusterings_LSST}, we can get a very simple algorithm computing strong-diameter clusters. 
Formally, we show the following result. 

\begin{theorem}
\label{thm:main}
There is a deterministic distributed algorithm that outputs a clustering $\fC$ of the input graph $G$ consisting of separated clusters of diameter $O(\log^3 n)$ such that at least $n/2$ nodes are clustered. 
The algorithm runs in $O(\log^6 n)$ \congest rounds. 
\end{theorem}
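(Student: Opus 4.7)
The plan is to devise a single-pass ball-carving algorithm that always maintains strong-diameter clusters, combining the bit-bucketing structure of Rozhoň-Ghaffari with the Steiner-path growth idea of Elkin-Haeupler-Rozhoň-Grunau. Let $b = \lceil \log_2 n \rceil$; each node initially forms a singleton cluster labeled by its unique $b$-bit ID. The algorithm runs $b$ phases, one per bit position. In phase $i$, each current cluster is colored blue if bit $i$ of its ID is $0$ and red otherwise. The phase then executes $s = \Theta(\log^2 n)$ BFS sub-steps. Each sub-step is a one-layer BFS expansion of every blue cluster inside $G$: every reached node is added as a new member of the blue cluster, and if that node belonged to a red cluster, the whole red cluster is absorbed and takes on the blue cluster's ID, so that its nodes join the blue cluster through the absorption edge. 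Ties between multiple blue claimants are broken by smallest ID. After all $s$ sub-steps of the phase, every red cluster that still shares an edge with some blue cluster is released, meaning its nodes are permanently dropped from the output clustering. Because every added node is attached to its new cluster via a concrete edge of $G$, the cluster is always connected inside its own induced subgraph, so strong-diameter holds automatically.

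The analysis has four parts. For separation, consider the first bit $i$ on which two surviving clusters differ: one was blue and the other red in phase $i$; if they had been adjacent at that point, the red one would have been absorbed or released, a contradiction. Hence any two surviving clusters sharing an edge must agree on all $b$ bits of their ID, i.e., be the same cluster. For coverage, I would use the standard R-G counting trick: a red cluster released in phase $i$ failed to be absorbed despite $s$ rounds of blue growth, which allows a charging argument attributing each released node to a comparable amount of absorbed mass on the neighboring blue side; this bounds the released mass per phase by $n/(2b)$ and the total release over $b$ phases by $n/2$. For round complexity, each sub-step requires intra-cluster coordination of a one-hop BFS wave, which in \congest takes time proportional to the current cluster strong-diameter --- at most $O(\log^3 n)$ rounds by the diameter bound below --- and there are $b \cdot s = O(\log n) \cdot O(\log^2 n) = O(\log^3 n)$ sub-steps overall, giving $O(\log^6 n)$ rounds in total.

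The most delicate part is the strong-diameter bound. I would prove by induction on $i$ that every cluster at the end of phase $i$ has strong-diameter $O(i \log^2 n)$, so that $D_b = O(\log^3 n)$. The picture is to view each cluster as a BFS ball inside $G$ around a designated leader, whose radius advances by one per sub-step, and to show that absorptions of red sub-trees grafted at the frontier contribute only additively to the phase-wide radius increment. The main obstacle is precisely this point: a naive bound suffers geometric blow-up because each absorption could in principle add the absorbed cluster's diameter (up to $D_i$) to the merged diameter, and $s$ such absorptions per phase over $b$ phases would explode. The resolution, and the technical heart of the proof, is the EHRG-style Steiner-path bookkeeping: by always growing from the same fixed leader at rate one per sub-step and charging each grafted red sub-tree to its one-time radius contribution (rather than re-charging it with every subsequent absorption), the cluster radius after phase $i+1$ is bounded by $D_i + O(\log^2 n)$, closing the induction and yielding the claimed $O(\log^3 n)$ diameter.
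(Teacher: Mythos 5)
There are two genuine gaps, both stemming from the same design choice: you let the growing (blue) side absorb \emph{entire red clusters through a single edge}, and you only delete at the very end of the phase. First, the coverage bound does not follow. Your algorithm has no per-step decision by a growing cluster to \emph{stop} growing; growth is unconditional for $s$ sub-steps and then every red cluster still adjacent to a blue one is released wholesale. Nothing bounds the mass released this way by $n/(2b)$ --- a single released red cluster can be arbitrarily large, and the ``standard R-G counting trick'' you invoke is not available because that trick is powered precisely by the rule you omitted: in each step the growing cluster compares the proposing mass to $|C|/(2b)$, grows if the proposal is large, and otherwise deletes \emph{only the proposing boundary nodes} (at most a $1/(2b)$ fraction of its own size) and is thereafter sealed off. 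Without that comparison and without restricting deletion to the proposers, the charging argument has nothing to charge against. Second, the strong-diameter induction does not close. When a whole red cluster of diameter up to $D_i$ is grafted onto the blue frontier by one edge, the blue radius can jump by $D_i+1$ in a single sub-step, and the next sub-step's frontier (now including the far end of the grafted cluster) can graft another one; a chain of $s$ such absorptions gives a radius increase of order $s\cdot D_i$ per phase, not $D_i+O(s)$. The ``one-time radius contribution'' bookkeeping you gesture at does not prevent these chains, because each new graft genuinely extends the geodesic to the leader by the full diameter of the absorbed piece.

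The paper's algorithm avoids both problems with one device you are missing: at the start of each phase it builds a BFS forest $F_0$ rooted at the current terminals, and what gets re-grafted is never a whole cluster but a \emph{subtree of the shrinking side's BFS forest}, rooted at the unique boundary node $v$ on its root-path that touches the growing side. When $T_v$ is re-hung onto a neighbor $w$ in the growing tree, the new depth of any $u\in T_v$ is $\left(d_0(u)-d_0(v)\right)+1+d_j(w)$, and the BFS property $d_0(w)\le d_0(v)+1$ turns this into $d_{j+1}(u)\le d_0(u)+2(j+1)$ --- a flat $+2$ per step independent of the size or depth of the grafted piece, giving $O(\log^2 n)$ radius growth per phase and $O(\log^3 n)$ overall. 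Combined with the per-step grow-or-seal decision (grow iff the proposing subtrees carry at least $|V(T)|/(2b)$ mass, else delete exactly those subtrees), this simultaneously yields the deletion bound, the separation at the end of the phase, and the additive diameter growth. Your high-level architecture (bit phases, grow one color, delete a small fraction, $O(\log^6 n)$ rounds) matches the paper, but these two mechanisms are the content of the proof and need to be supplied.
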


Recall that by repeatedly applying above result we get the following corollary. 

\begin{corollary}
\label{cor:main}
There is a deterministic distributed algorithm that outputs a network decomposition with $C = O(\log n)$ colors and diameter $D = O(\log^3 n)$. The algorithm runs in $O(\log^7 n)$ \congest rounds. 
\end{corollary}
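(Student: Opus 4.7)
The plan is to run the bit-by-bit cluster-merging schedule of Rozhon-Ghaffari, but modify the growth rule so that every new vertex a cluster gains is added via a single BFS hop inside the cluster's own induced subgraph, in the spirit of Elkin-Haeupler-Rozhon-Grunau. This ``internal-only'' growth immediately upgrades the weak-diameter guarantee of Rozhon-Ghaffari to a strong-diameter one: the induced diameter of a cluster grows by only $O(1)$ per growth step.

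More concretely, give each node an $O(\log n)$-bit identifier and treat each node as its own singleton cluster; then run $b = O(\log n)$ phases, one per bit. At the start of phase $i$, a cluster is called \emph{red} if bit $i$ of its current label is $0$ and \emph{blue} otherwise, and the goal of the phase is to ensure that every blue cluster is either fully absorbed into some red neighbor (becoming red and continuing into the next phase) or permanently killed. A phase consists of $T = O(\log^2 n)$ sub-steps. In each sub-step, every red cluster proposes to add one BFS layer of blue neighbors into its induced subgraph, with conflicts resolved deterministically, e.g., by smallest red-cluster ID. A blue cluster that has part of its boundary absorbed but is not yet fully absorbed at the end of the phase is declared killed and stripped from the graph; this final strip also enforces the required non-adjacency between the surviving clusters.

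The analysis has two prongs. For coverage, the Rozhon-Ghaffari potential argument transfers: at each sub-step, either a red cluster gains enough blue vertices to grow multiplicatively by a $(1+1/T)$-factor, or it retires while killing only an $O(1/T)$ fraction of its own mass. Since a cluster cannot multiplicatively grow more than $O(T)$ times without exceeding $n$, every cluster is forced to retire within $T$ sub-steps, and the total killed mass per phase is $O(n/b)$, giving total kills at most $n/2$ across the $b$ phases. For diameter, adding one BFS layer increases the induced diameter by at most $2$, so after $bT = O(\log^3 n)$ sub-steps the strong diameter is $O(\log^3 n)$. For round complexity, each sub-step is implemented by aggregating along a BFS spanning tree maintained inside each current cluster; since cluster diameters stay $O(\log^3 n)$ throughout, one sub-step costs $O(\log^3 n)$ \congest rounds, and $O(\log^3 n)$ sub-steps give $O(\log^6 n)$ total.

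The hardest part, I expect, is the coverage argument under partial absorption: unlike in Rozhon-Ghaffari, a blue cluster here is not absorbed atomically but eroded layer by layer along its boundary, so one must carefully charge each killed blue vertex to a specific retired red cluster in order to invoke the $1/T$ killing bound. I would handle this by partitioning the killed blue vertices of each phase by the retired red cluster whose retirement caused their removal, then applying the geometric-growth-or-retire dichotomy cluster-by-cluster. A secondary technical point is implementing one sub-step within \congest bandwidth by piggybacking on an incrementally maintained BFS spanning tree of each cluster, which is standard once trees are grown alongside the clusters and only adds logarithmic overhead.
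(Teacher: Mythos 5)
Your proposal is aimed at the wrong target: what you construct is a \emph{single} low-diameter clustering that covers at least $n/2$ of the vertices in $O(\log^6 n)$ rounds, which is the content of \cref{thm:main}/\cref{thm:clustering_theorem}, not of \cref{cor:main}. The corollary asks for a full network decomposition: \emph{every} node must receive one of $C=O(\log n)$ colors so that each color class induces a low-diameter clustering. The paper's proof of the corollary is exactly the step you omit: invoke the theorem on $G$, give all resulting clusters color $1$, delete the clustered vertices, invoke the theorem again on the residual graph, give those clusters color $2$, and so on. Since each invocation clusters at least half of the surviving vertices, $O(\log n)$ iterations suffice, yielding $C=O(\log n)$ colors, diameter $D=O(\log^3 n)$, and round complexity $O(\log n)\cdot O(\log^6 n)=O(\log^7 n)$ --- note that your claimed $O(\log^6 n)$ total does not even match the $O(\log^7 n)$ in the statement, which is a symptom of the missing iteration.

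Separately, your re-derivation of the clustering theorem itself has a genuine gap in the coverage argument, and you partially sensed it yourself. Your rule ``a blue cluster that has part of its boundary absorbed but is not yet fully absorbed at the end of the phase is declared killed'' is not chargeable to the retire-or-grow dichotomy. When a red cluster $R$ retires it kills only the fewer than $|R|/(2b)$ blue vertices currently proposing to it; the untouched remainder of a partially eroded blue cluster can be arbitrarily large. Concretely, take $R$ a single vertex attached to one end of a blue path on $n-1$ vertices: $R$ grows for $O(\log n)$ sub-steps, then retires (it can no longer gain a $1/(2b)$ fraction of its size from a single boundary vertex), and your end-of-phase rule then kills the remaining $n-O(\log n)$ blue vertices, destroying the $n/(2b)$ per-phase deletion budget. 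The paper's fix is precisely the mechanism you replaced: each blue cluster is maintained as a BFS tree rooted at its terminal, and when a boundary vertex $v$ proposes, its \emph{entire subtree} $T_v$ proposes, moves, or is deleted with it. This keeps the blue remainder a connected tree of non-increasing depth rooted at its terminal (so it survives as a valid strong-diameter cluster and never needs to be killed wholesale), while the declined subtrees total fewer than $|R|/(2b)$ vertices per retiring red cluster, which is what makes the deletion accounting go through.
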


\paragraph{Comparison of our algorithm with \cite{rozhon_ghaffari2019decomposition}}

We now give a high-level explanation of the algorithm of \cite{rozhon_ghaffari2019decomposition} and afterwards compare it to our algorithm.

In the algorithm of \cite{rozhon_ghaffari2019decomposition}, we start with a trivial clustering where every node is a cluster. Every cluster inherits the unique identifier from the starting node. During the algorithm, a cluster can grow, shrink and some vertices are deleted from the graph and will not be part of the final output clustering. In the end, the nonempty clusters cluster at least $n/2$ nodes and their weak-diameter is $O(\log^3 n)$. 

More concretely, the algorithm consists of $b = O(\log n)$ phases where $b$ is the number of bits in the node identifiers. 
In phase $i$, we split clusters into red and blue clusters based on the $i$-th bit in their identifier; the goal of the phase is to disconnect the red from the blue clusters by deleting at most $n/(2b)$ nodes in the graph. 

Here is how this is done. The $i$-th phase consists of $O(b \log n)$ steps. In general red clusters can only grow and blue clusters can only shrink.
More concretely, in each step every node in a blue cluster neighboring with a red cluster proposes to join an arbitrary neighboring red cluster. 
Now, for a given red cluster $C$, if the total number of proposing blue nodes is at least $|C|/(2b)$, then $C$ decides to grow by adding all the proposing blue nodes to the cluster. Otherwise, the proposing nodes are deleted which results in $C$ not being adjacent to any other blue nodes until the end of the phase. 

One can see that the number of deleted nodes per phase is only $n/(2b)$ in total, as needed. On the other hand, each cluster can grow only $O(b \log n)$ times until it has more than $n$ nodes, which implies that the weak-diameter of each cluster grows only by $O(b \log n) = O(\log^2 n)$ per phase. 

This concludes the description of the algorithm of \cite{rozhon_ghaffari2019decomposition}. Note that the clusters from their algorithm only have small weak-diameter since the nodes in a cluster can leave it in the future and the cluster may then even disconnect. 

\textbf{Our strong-diameter algorithm: }
To remedy the problem with the weak-diameter guarantee, we change the algorithm of \cite{rozhon_ghaffari2019decomposition} as follows: 
Instead of clusters, we will think in terms of their centers that we call \emph{terminals}. Given a set of terminals $Q$ such that $Q$ is $R$-ruling, i.e., for every $u \in V(G)$ we have $d_G(Q, u) \le R$, we can always construct a clustering with strong-diameter $R$ by running a breadth first search from $Q$. Hence, keeping a set of terminals is equivalent to keeping a set of strong-diameter clusters. 

Our algorithm starts with the trivial clustering where $Q = V(G)$. During the algorithm, we keep a set of terminals $Q$ and in each of the $b$ phases we delete at most $n/(2b)$ nodes and make some nodes of $Q$ nonterminals such that those remaining terminals with their $i$-th bit equal to $0$ are in a different component than those that have their $i$-th bit equal to $1$ (see \cref{fig:outer}). Moreover, we want that if at the beginning of the phase the set $Q$ is $R$-ruling, then it is $R + O(b \log n)$-ruling at the end of the phase (cf. the $O(b \log n)$ increase in weak-diameter in the algorithm of \cite{rozhon_ghaffari2019decomposition}). 

At the beginning of each phase, we run a breadth first search from the set $Q$, which gives us a clustering with strong diameter $R$ (see the left picture in \cref{fig:alg}). We in fact think of each cluster as a rooted tree of radius $R$. 

We then implement the same growing process as  \cite{rozhon_ghaffari2019decomposition}, but with a twist: whenever a blue node $v$ proposes to join a red cluster, the whole subtree rooted at $v$ proposes instead of just $v$ (see the middle picture in \cref{fig:alg}). 
This is because rehanging/deleting the whole subtree does not break the strong-diameter guarantee of blue clusters. 
If a blue node joins a red cluster, it stops being a terminal. 

The only new argument that needs to be done is that the diameter of red clusters does not grow a lot, which is trivial in the algorithm of \cite{rozhon_ghaffari2019decomposition} and follows by a simple argument in our algorithm. 

We note that the algorithm of \cite{elkin_haeupler_rozhon_gruanu2022Clusterings_LSST} also keeps track of terminals. However, to separate the red and blue terminals in one phase their algorithm relies on computing global aggregates, which can only be done efficiently on a low-diameter input graph.

%\paragraph{What is the simplest deterministic algorithm for constructing low (weak-)diameter clustering?}

%Suppose that we do not care about weak vs strong diameter and just want to have an efficient deterministic distributed algorithm constructing a low diameter clustering. Then, there are two contenders for the ``simplest'' algorithm; the algorithm of \cite{rozhon_ghaffari2019decomposition} and our algorithm. The description of our algorithm and its usage is simpler than that of \cite{rozhon_ghaffari2019decomposition}, since they need to track a so-called Steiner tree for every cluster to ensure that we can use the clusters although we do not have a bound on their diameter. On the other hand, our algorithm requires a bit more careful analysis, namely, \cref{claim:distance_to_root} is not present in the analysis of \cite{rozhon_ghaffari2019decomposition}. 

\section{Clustering Algorithm}

In this section we prove \cref{thm:clustering_theorem} given below, which is a more precise version of \cref{thm:main}. 

\begin{theorem}[Clustering Theorem]
\label{thm:clustering_theorem}
Consider an arbitrary $n$-node network graph $G = (V,E)$ where each node has a unique $b = O(\log n)$-bit identifier. There is a deterministic distributed algorithm that, in $O(\log^6 n)$ rounds in the \congest model, finds a subset $V' \subseteq V$ of nodes, where $|V'| \geq |V|/2$, such that the subgraph $G[V']$ induced by the set $V'$ is partitioned into non-adjacent disjoint clusters of diameter $O(\log^3 n)$.
\end{theorem}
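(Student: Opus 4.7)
The plan is to run the $b=O(\log n)$-phase algorithm sketched in the introduction, maintaining at each stage a surviving vertex set $V_i \subseteq V$ and a terminal set $Q_i \subseteq V_i$ that is $R_i$-ruling in $G[V_i]$. I will take $V_0=Q_0=V$ and $R_0=0$, and argue inductively that phase $i$ preserves the invariants $|V_{i-1}\setminus V_i| \le n/(2b)$, $R_i \le R_{i-1}+O(b\log n)$, and ``any two terminals of $Q_i$ whose $i$-th identifier bits differ lie in different connected components of $G[V_i]$.'' Once all $b$ phases complete, $|V_b|\ge n/2$, $R_b = O(b^2\log n)=O(\log^3 n)$, and since two distinct identifiers must differ in some bit, the terminals in $Q_b$ are pairwise separated in $G[V_b]$. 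One final BFS from $Q_b$ in $G[V_b]$ then produces the required clustering with strong diameter $2R_b = O(\log^3 n)$.

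A single phase is executed as follows. First I run a BFS from $Q_{i-1}$ in $G[V_{i-1}]$, producing a partition of $V_{i-1}$ into rooted trees of depth $\le R_{i-1}$, and color each tree red or blue according to the $i$-th bit of its root's identifier. I then run the Rozhoň–Ghaffari growing process for $O(b\log n)$ steps with the key twist: whenever a blue vertex $v$ adjacent to a red cluster would ``propose,'' the entire subtree rooted at $v$ in the current blue BFS tree proposes with it. Each red cluster $C$ that receives proposals summing to $\ge |C|/(2b)$ vertices absorbs those subtrees (and roots that were terminals cease to be terminals); otherwise it rejects, and the proposing nodes are removed from $V$. Moving entire subtrees is exactly what preserves the blue strong-diameter: cutting a subtree out of a rooted tree leaves a rooted tree, so every surviving blue cluster remains a tree of depth $\le R_{i-1}$ rooted at its terminal; moving individual nodes could disconnect a blue cluster and destroy its strong-diameter guarantee.

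The per-phase accounting essentially follows \cite{rozhon_ghaffari2019decomposition}. A red cluster can grow at most $\log_{1+1/(2b)} n = O(b\log n)$ times before exceeding $n$ vertices, and any step in which $C$ does not grow wipes out its current blue boundary by rejection, so $O(b\log n)$ steps suffice until no blue is adjacent to red, giving the bit-$i$ separation. Charging rejected vertices against the most recent successful growth of the rejecting cluster bounds total deletions by $n/(2b)$ per phase. The genuinely new ingredient, and the main obstacle, is bounding the strong-radius growth of red clusters: naively, absorbing a subtree rooted at $v$ via an edge $(u,v)$ increases the red tree depth by $1+(\text{depth of subtree from }v) \le 1+R_{i-1}$, which would compound multiplicatively over $O(b\log n)$ steps. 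The ``simple argument'' promised in the introduction must show that, amortized across the $O(b\log n)$ absorption events of the phase, the red cluster's strong-radius grows only \emph{additively} by $O(b\log n)$; intuitively, absorbed nodes arrive within one edge of the previous red boundary, so after rebuilding a BFS inside the (current red)$\cup$(newly absorbed) set the depth increases by only a few per step. Verifying this carefully and getting the recurrence $R_i \le R_{i-1}+O(b\log n)$ is the crux.

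Finally, for the \congest implementation, I will represent each cluster by its rooted tree and execute one growing step via three standard tree primitives: aggregate subtree sizes upward to each cluster root, make the grow/reject decision at the root, and broadcast the result back down so each node learns whether it is absorbed, remains blue, or is deleted. Each primitive runs in $O(R_i) = O(\log^3 n)$ rounds on a tree of depth $R_i$. Summing over the $b$ phases, the $O(b\log n)$ steps per phase, and the $O(R_b)$ rounds per step yields total round complexity $O(b \cdot b\log n \cdot R_b) = O(\log^6 n)$, establishing \cref{thm:clustering_theorem}.
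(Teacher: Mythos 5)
Your algorithm is the paper's algorithm, and most of your analysis (the deletion charging, the separation argument via $(1+\tfrac{1}{2b})^{t}>n$, the round-complexity accounting) matches the paper's. But the one step you explicitly flag as ``the crux'' --- showing that the red clusters' strong radius grows only additively by $O(b\log n)$ per phase rather than compounding --- is exactly the step you do not prove, and the intuition you offer does not close it. A node $u$ deep inside an absorbed subtree $T_v$ can sit $R_{i-1}$ hops below the attachment point, so ``absorbed nodes arrive within one edge of the previous red boundary'' is false for all but the subtree's root, and re-running a BFS inside the enlarged red cluster still gives a per-step radius increase of up to $1+R_{i-1}$, i.e., the multiplicative blow-up you were worried about. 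This is a genuine gap, not a routine verification.

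The paper's argument (its Ruling Claim) is short but uses a specific invariant you are missing: measure every node $u$ by its depth $d_0(u)$ in the \emph{initial} BFS forest $F_0$ of the phase, and prove by induction on the step $j$ that every red node satisfies $d_j(u)\le d_0(u)+2j$ while every blue node satisfies $d_j(u)=d_0(u)$. When the subtree $T_v$ is rehung onto a red neighbor $w$ of $v$, the new root-path of $u\in T_v$ decomposes as $u\to v$ (length $d_0(u)-d_0(v)$, unchanged from $F_0$), the edge $v\to w$, and $w\to r$ (length $d_j(w)\le d_0(w)+2j$); the BFS property of $F_0$ gives $d_0(w)\le d_0(v)+1$, so $d_{j+1}(u)\le d_0(u)+2(j+1)$. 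Thus the tree depth grows by at most $2$ per step and by $2t=O(\log^2 n)$ per phase, which is what yields $R_i\le R_{i-1}+O(\log^2 n)$ and hence $R_b=O(\log^3 n)$. A secondary point: you should define the proposers as only the \emph{highest} blue node on each root-path that is adjacent to a red node (the paper requires $v$ to be the unique such node on its path to the root); otherwise the proposing subtrees $T_v$ can be nested, and both the growth threshold $\sum_v|V(T_v)|\ge|V(T)|/(2b)$ and the deletion charging would double-count vertices.
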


\begin{figure}[ht]
    \centering
    \includegraphics[width = .9\textwidth]{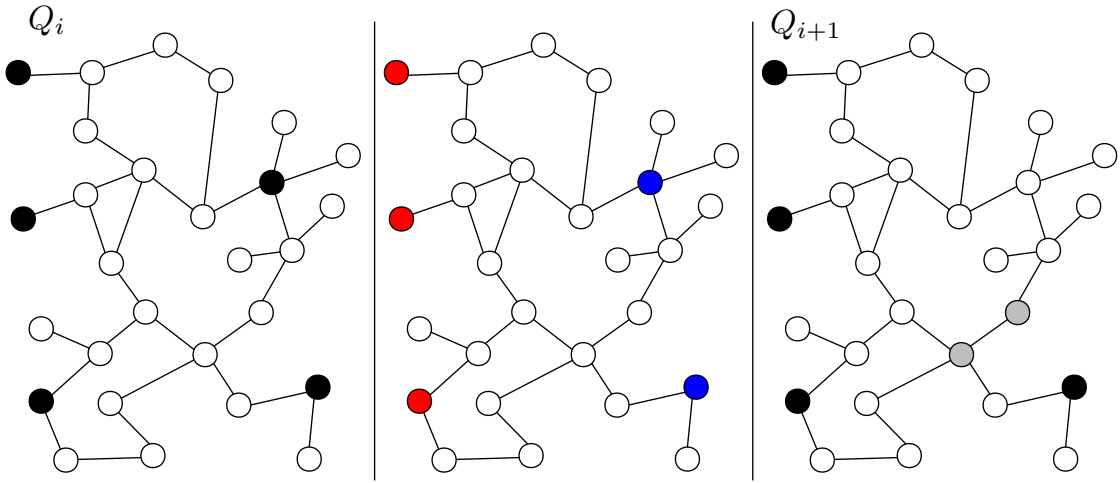}
    \caption{The figure shows one phase of the algorithm from \cref{thm:clustering_theorem}. The left figure contains a $3$-ruling set of terminal nodes $Q_i$ that we start with at the beginning of phase $i$. We split $Q_i$ into red and blue terminals according to the $(i+1)$-th bit of their identifiers. Then, we implement one phase of the algorithm. As a result, some of the nodes are deleted (grey) and some blue terminals stop being terminals. The set of remaining terminals $Q_{i+1}$ is on one hand $6$-ruling, on the other hand the blue terminals in $Q_{i+1}$ are separated from the red terminals. }
    \label{fig:outer}
\end{figure}

\begin{figure}[ht]
    \centering
    \includegraphics[width = .9\textwidth]{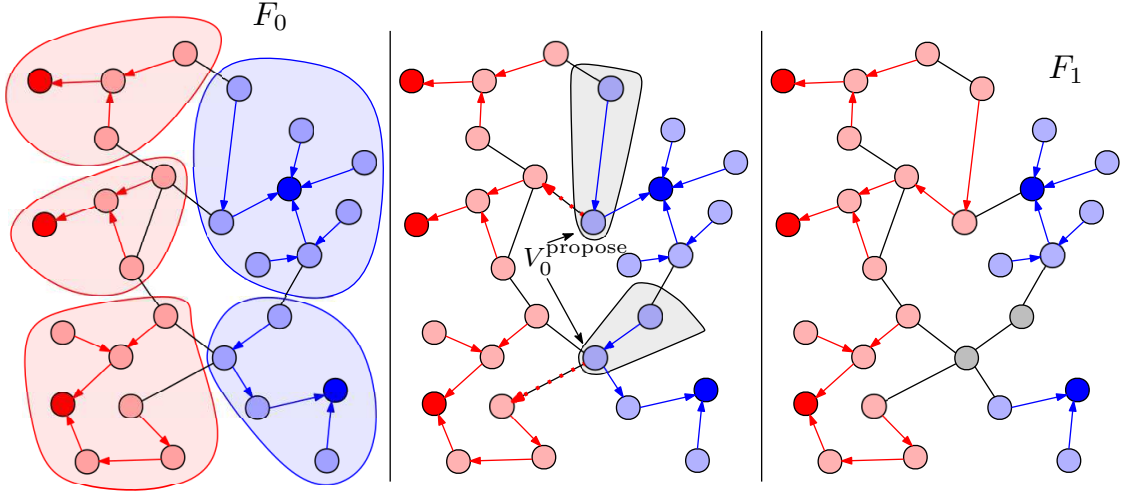}
    \caption{This figure explains one step of the algorithm of \cref{thm:clustering_theorem}, namely it shows what happens between the middle and the right picture of \cref{fig:outer}. 
    The left picture illustrates the beginning of the phase where we compute a BFS forest $F_0$ from the set $Q$ of terminals. In the first (and any other) step (the middle picture) we construct a set $V_0^{propose}$. Some proposals are accepted and the respective blue nodes join red clusters, while some proposals are rejected and respective blue nodes are deleted (the right picture).    }
    \label{fig:alg}
\end{figure}

We start by describing the algorithm outline of \cref{thm:clustering_theorem}. The construction has $b = O(\log n)$ phases, corresponding to the number of bits in the identifiers.
For $i \in [0,b-1]$, we denote by $V_i$ the set of living vertices at the beginning of phase $i$. 
Initially, all nodes are living and therefore $V_0 = V$.
In each phase, at most $|V|/(2b)$ nodes die. Dead nodes remain dead and will not be contained in $V'$.
Some of the alive nodes are terminals. We denote the set of terminals at the beginning of phase $i$ by $Q_i$.
Initially, all living nodes are terminals and therefore $Q_0 = V$.

Slightly abusing the notation, we let $V_b$ and $Q_b$ denote the set of living vertices and terminals at the end of phase $b-1$, respectively. We define $V'$ to be the final set of living nodes, i.e., $V' = V_b$, and each connected component of $G[V']$ will contain exactly one terminal in $Q_b$.

For stating the key invariants the algorithm satisfies, we need the following standard definition of a ruling set:

\begin{definition}[Ruling set]
We say that a subset $Q \subseteq V(G)$ is $R$-ruling in $G$ if every node $v \in V(G)$ satisfies $d_G(Q, v) \le R$. 
\end{definition}

\paragraph{Construction invariants}
The construction is such that, for each $i \in [0,b]$, the following three invariants are satisfied:

\begin{enumerate}[label = \Roman*.]
    \item Ruling Invariant: $Q_i$ is $R_i$-ruling in $G[V_i]$ for $R_i = i \cdot O(\log^2 n)$.
    \item Separation Invariant: Let $q_1,q_2 \in Q_i$ be two nodes in the same connected component of $G[V_i]$. Then, the identifiers of $q_1$ and $q_2$ coincide in the first $i$ bits.
    \item Deletion Invariant: $|V_i| \geq \left(1 - \frac{i}{2b}\right)|V|$.
\end{enumerate}

Note that setting $V_0 = Q_0 = V$ indeed results in the invariant being satisfied for $i = 0$.
In the end, we set $V' = V_b$. The deletion invariant for $i = b$ states that $|V'| \geq |V|/2$. The separation invariant implies that each connected component of $G[V']$ contains at most one node of $Q_b$. Together with the ruling invariant, which states that $Q_b$ is $R_b$-ruling in $G[V']$ for $R_b = O(\log^3 n)$, this implies that each connected component of $G[V']$ has diameter $O(\log^3 n)$. 
Next, in \cref{sec:outline_one_phase} we present the outline of one phase.
Afterwards, in \cref{sec:analysis_phase} we prove the correctness of the algorithm and analyse the \congest complexity.

\subsection{Outline of One Phase}
\label{sec:outline_one_phase}
In phase $i$, we compute a sequence of rooted forests $F_0, F_1, \ldots, F_t$ in $t = 2b^2 = O(\log^2 n)$ steps. 
At the beginning, $F_0$ is simply a BFS forest in $G[V_i]$ from the set $Q_i$. 
At the end, we set $V_{i+1} = V(F_t)$ and $Q_{i+1}$ is the set of roots of the forest $F_t$.

Let $j \in \{0,1,\ldots,t-1\}$ be arbitrary. We now explain how $F_{j+1}$ is computed given $F_j$. In general, each node contained in $F_{j+1}$ is also contained in $F_j$, i.e., $V(F_{j+1}) \subseteq V(F_j)$, and each root of $F_{j+1}$ is also a root in $F_j$. We say that a tree in $F_j$ is a red tree if the $(i+1)$-th bit of the identifier of its root is $0$  and otherwise we refer to the tree as a blue tree.
Also, we refer to a node in a red tree as a red node and a node in a blue tree as a blue node.
Each red node in $F_j$ will also be a red node in $F_{j+1}$. Moreover, the path to its root is the same in both $F_j$ and $F_{j+1}$.
Each blue node in $F_j$ can (1) either be a blue node in $F_{j+1}$, in which case the path to its root is the same in both $F_j$ and $F_{j+1}$, (2) be deleted and therefore not be part of any tree in $F_{j+1}$, (3) become a red node in $F_{j+1}$.

Let $V_j^{propose}$ be the set which contains each node $v$ which (1) is a blue node in $F_j$, and (2) $v$ is the only node neighboring a red node (in the graph $G$) in the path from $v$ to its root in $F_j$.
For a node $v \in V_j^{propose}$, let $T_v$ be the subtree rooted at $v$ with respect to $F_j$. Note that it directly follows from the way we defined $V_j^{propose}$ that $v$ is the only node in $T_v$ which is contained in $V_j^{propose}$.

Each node in $V_j^{propose}$ proposes to an arbitrary neighboring red tree in $F_j$. Now, a given red tree $T$ in $F_j$ decides to grow if 

\[\sum_{\substack{v \in V^{propose}_j \colon \\ \text{$v$ proposes to $T$}}} |V(T_v)| \geq \frac{|V(T)|}{2b}.\]

If $T$ decides to grow, then it accepts all the proposals it received, and otherwise $T$ declines all proposals it received.
We now set 

\[V(F_{j+1}) = V(F_j) \setminus \left( \bigcup_{\substack{v \in V^{propose}_j, \\ \text{ the proposal of $v$ was declined}}} V(T_v) \right).\]

Each node in $V(F_{j+1}) \setminus V^{propose}_i$ has the same parent in $F_{j+1}$ and $F_j$, or is a root in both $F_{j+1}$ and $F_j$.
Each node in $V(F_{j+1}) \cap V^{propose}_j$, i.e., each node whose proposal got accepted by some red tree $T$ in $F_j$, changes its parent to be an arbitrary neighboring node in the tree $T$. 
Note that if a red tree $T$ decides to grow, then the corresponding tree in $F_{j+1}$ contains at least $\left(1 + \frac{1}{2b}\right)|V(T)|$ vertices. Moreover, if $T$ does not decide to grow, then $T$ is also a tree in $F_{j+1}$ and is not neighboring with any blue tree in $F_{j+1}$. This follows from the fact that each blue node neighboring a red tree either becomes a red node or gets deleted. 

We now have fully specified how the rooted forests $F_0, F_1, \ldots, F_t$ are computed and recall that in the end we set $V_{i+1} = V(F_t)$ and $Q_{i+1}$ is the set of roots of the forest $F_t$.

\subsection{Analysis}
\label{sec:analysis_phase}

For each $j \in \{0,1,\ldots,t\}$ and $u \in V(F_j)$, we define $d_j(u)$ as the length of the path from $u$ to its root in $F_j$. Note that as $F_0$ is a BFS forest, for any neighboring nodes $w,v \in V(F_0)$ it holds that $d_0(w) \leq d_0(v) + 1$.

\begin{claim}[Ruling Claim]
\label{claim:distance_to_root}
For every $i \in \{0,1,\ldots,t\}$, the following holds:

\begin{enumerate}[leftmargin = 4cm]
    \item [Blue Property:\hspace{1em}] Every blue node in $F_j$ satisfies $d_j(u) = d_0(u)$.
    \item [Red Property:\hspace{1em}]  Every red node in $F_j$ satisfies $d_j(u) \leq d_0(u) + 2j$.
\end{enumerate}
In particular, this implies that Invariant (I) is preserved.
\end{claim}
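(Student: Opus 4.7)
My plan is to prove both properties simultaneously by induction on $j$. The base case $j=0$ is immediate: both inequalities reduce to $d_0(u)=d_0(u)$. For the inductive step, the Blue Property is easy: if $u$ is blue in $F_{j+1}$, then by the algorithm description $u$ was blue in $F_j$ with the identical path to its root, so $d_{j+1}(u)=d_j(u)=d_0(u)$ by the induction hypothesis.

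The nontrivial case is the Red Property at step $j+1$. If $u$ was already red in $F_j$, its path to the root is unchanged and the induction hypothesis immediately gives $d_{j+1}(u)=d_j(u)\le d_0(u)+2j\le d_0(u)+2(j+1)$. Otherwise $u$ was blue in $F_j$ and turned red because it lies in the subtree $T_v$ of some $v\in V_j^{propose}$ whose proposal was accepted by a red tree $T$, and $v$ reattaches to a $G$-neighbor $w\in V(T)$. Because $v$ is the only member of $V_j^{propose}$ inside $T_v$, every other node of $T_v$ retains its $F_j$-parent, so
\begin{equation*}
d_{j+1}(u)=\bigl(d_j(u)-d_j(v)\bigr)+1+d_j(w).
\end{equation*}
The induction hypothesis gives $d_j(u)=d_0(u)$ and $d_j(v)=d_0(v)$ (both blue in $F_j$), and $d_j(w)\le d_0(w)+2j$ (red in $F_j$). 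Since $v,w$ are $G[V_i]$-neighbors and $F_0$ is a BFS forest of $G[V_i]$ from $Q_i$, we have $d_0(w)\le d_0(v)+1$. Substituting yields $d_{j+1}(u)\le d_0(u)+2(j+1)$, as desired.

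Taking $j=t$ establishes the Ruling Invariant at the end of phase $i$: every $u\in V_{i+1}=V(F_t)$ is connected to its root in $Q_{i+1}$ by a path in $F_t\subseteq G[V_{i+1}]$ of length at most $d_0(u)+2t$. Since $F_0$ is a BFS forest of $G[V_i]$ from $Q_i$, the Ruling Invariant at the start of phase $i$ gives $d_0(u)\le R_i$, and with $t=O(\log^2 n)$ this bounds the distance by $R_i+O(\log^2 n)=R_{i+1}$. The main subtlety is the rehanging step: one must use that $v$ is the unique $V_j^{propose}$-member inside $T_v$ (so its descendants keep their $F_j$-parents and the depth telescopes cleanly), and that $F_0$ is BFS with respect to $G[V_i]$ rather than $G$, so the estimate $d_0(w)\le d_0(v)+1$ genuinely applies to $v,w\in V(F_j)\subseteq V_i$.
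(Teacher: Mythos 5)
Your proof is correct and follows essentially the same route as the paper's: induction on $j$, with the rehanging case handled by decomposing the new root-path into the segment from $u$ to $v$ inside $T_v$, the edge $vw$, and the old path from $w$ to its red root, then using the BFS property $d_0(w)\le d_0(v)+1$. The only cosmetic difference is that you express the $u$-to-$v$ segment as $d_j(u)-d_j(v)$ via the blue induction hypothesis, whereas the paper writes it directly as $d_0(u)-d_0(v)$; these are identical.
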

\begin{proof}
The blue property directly follows from the fact that for any blue node the path to its root in $F_j$ is the same as the path to its root in $F_0$.
We prove the red property by induction on $j$. The base case $j = 0$ trivially holds.

For the induction step, consider an arbitrary $j \in \{0,1,\ldots,t-1\}$.
We show that the statement holds for $j+1$ given that it holds for $j$.

Consider an arbitrary red node $u$ in $F_{j+1}$. We have to show that $d_{j+1}(u) \leq d_0(u) + 2(j+1)$. If $u$ is also a red node in $F_j$, then we can directly use induction.
Hence, it remains to consider the case $u$ is a blue node in $F_j$. 

In that case, there exists a node $v \in V^{propose}_j$ such that $u \in V(T_v)$ and the proposal of $v$ was accepted. In particular, $v$'s parent in $F_{j+1}$ is some neighboring node $w$ which is part of some red tree in $F_j$ (see \cref{fig:rehanging}). 

The path from $u$ to its root $r$ in $F_{j+1}$ can be decomposed into a path from $u$ to $v$, an edge from $v$ to $w$ and a path from $w$ to its root $r$.

The path from $u$ to $v$ in $F_{j+1}$ is the same as the path from $u$ to $v$ in $F_0$ and therefore of length $d_0(u) - d_0(v)$. 
The path from $w$ to $r$ in $F_{j+1}$ is the same as the path from $w$ to $r$ in $F_j$ and therefore has a length of $d_j(w)$ with $d_j(w) \leq d_0(w) + 2j$ according to the induction hypothesis. 
Moreover, we noted above that because $w$ and $v$ are neighbors, we have $d_0(w) \leq d_0(v) + 1$.
Hence, we can upper bound the length of the path from $u$ to its root in $F_{j+1}$ by

\[d_{j+1}(u) \leq \left( d_0(u) - d_0(v)\right) + 1 + \left( d_0(w) + 2j\right) \leq d_0(u) + 2(j+1)\]
which finishes the induction proof. It remains to prove the last part of the claim. To that end, assume that the ruling invariant is satisfied for $i$, i.e., $Q_i$ is $R_i$-ruling in $G[V_i]$ for $R_i = i \cdot O(\log^2 n)$. Then, every node $u$ in $V(F_t) = V_{i+1}$ satisfies

\[d_{G[V_{i+1}]}(Q_{i+1},u) \leq d_t(u) \leq d_0(u) + 2t \leq i \cdot O(\log^2 n) + O(\log^2 n) = (i+1)O(\log^2 n)\]

and therefore the ruling invariant is satisfied for $i+1$.
\end{proof}

\begin{figure}
    \centering
    \includegraphics{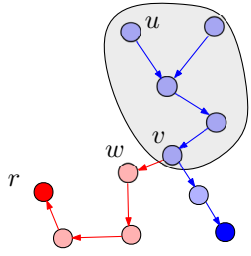}
    \caption{The figure shows the situation in the proof of \cref{claim:distance_to_root}. The path from $u$ to $r$ splits into three parts: from $u$ to $v$, then to $w$, then to $r$. The length of each part is upper bounded separately. }
    \label{fig:rehanging}
\end{figure}

\begin{claim}(Separation Claim)
\label{claim:tree_size}
No red node in $F_t$ is neighboring a blue node in $F_t$. 
In particular, this implies that Invariant (II) is preserved.
\end{claim}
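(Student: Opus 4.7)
The plan is to establish a \emph{stability property} for red trees: once a red tree $T$ fails to grow at some step $j$, $T$ will have no blue neighbor in any later forest $F_{j'}$ with $j' > j$. Combined with a simple growth-rate bound showing that a red tree can grow in strictly fewer than $t$ of the $t$ steps, this will force every red tree to skip growth at some step $j^* \le t-1$ and hence to have no blue neighbor in $F_t$.

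The stability property will rest on an \emph{absorption-or-deletion} dichotomy for blue nodes adjacent to $T$, which I expect to be the most subtle part of the argument. Given a blue node $v$ adjacent to $T$ in $F_j$, let $v'$ be the ancestor of $v$ on the path from $v$ to its root in $F_j$ that is closest to the root among those with a red neighbor in $G$; such $v'$ exists because $v$ itself qualifies, and by the choice of $v'$ the path from $v'$ to the root has $v'$ as its unique red-adjacent node, so $v' \in V_j^{propose}$ and $v \in V(T_{v'})$. If $v'$'s proposal is accepted, $T_{v'}$ is absorbed into a red tree and $v$ turns red; if rejected, $T_{v'}$ and $v$ are deleted. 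Either way $v$ is no longer a blue neighbor of $T$ in $F_{j+1}$. If additionally $T$ did not grow in step $j$, then the vertex set of $T$ is unchanged and no node ever turns from red or dead into blue, so every blue neighbor of $T$ in $F_{j+1}$ was already one in $F_j$; hence $T$ has no blue neighbor in $F_{j+1}$ at all. An easy induction (no blue neighbors $\Rightarrow$ no proposals $\Rightarrow$ no growth $\Rightarrow$ no blue neighbors at the next step) then upgrades this to the full stability property.

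For the growth bound, each growth of $T$ multiplies $|V(T)|$ by at least $1 + \tfrac{1}{2b}$, and since $\bigl(1+\tfrac{1}{2b}\bigr)^{2b} \geq 2$ for $b \geq 1$ and $b \geq \log_2 n$, if $T$ grew in all $t = 2b^2$ steps we would have $|V(T)| \geq 2^b \geq n$, leaving no alive blue node outside $T$ and contradicting the existence of a blue neighbor of $T$. Thus every red tree must skip growth at some step $j^* \in \{0,1,\ldots,t-1\}$, and applying the stability property at $j^*$ shows that no red--blue edge exists in $F_t$.

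Finally, Invariant (II) for $i+1$ follows quickly: given $q_1, q_2 \in Q_{i+1} \subseteq Q_i$ lying in the same component of $G[V_{i+1}] \subseteq G[V_i]$, the inductive hypothesis yields agreement on the first $i$ bits, and any disagreement on the $(i+1)$-st bit would force any path between $q_1$ and $q_2$ in $G[V_{i+1}]$ to cross an edge from a red node of $F_t$ to a blue node of $F_t$, contradicting what has just been shown.
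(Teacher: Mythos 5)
Your proof is correct and follows essentially the same route as the paper's: the growth-factor bound $\left(1+\tfrac{1}{2b}\right)^{2b^2} \geq 2^b \geq n$ forces every red tree to skip growth at some step, and the absorption-or-deletion dichotomy for blue neighbors (which the paper states as an observation during the algorithm description) then keeps such a tree free of blue neighbors through $F_t$. Your write-up merely makes explicit the identification of the proposing ancestor $v'$ and the stability induction, both of which the paper leaves implicit.
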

\begin{proof}
    We observed during the algorithm description that each red tree that decides to grow grows by at least a $(1 + \frac{1}{2b})$-factor in a given step. Our choice of $t = 2b^2$ implies that
    
    \[\left(1 + \frac{1}{2b}\right)^{t} = \left( \left(1 + \frac{1}{2b}\right)^{2b} \right)^{(t/2b)} > 2^{t/2b} = 2^b \geq n,\]
    
   and therefore each tree eventually stops growing. However, once a tree decides not to grow, it is not neighboring with any blue node and therefore no red node in $F_t$ is neighboring a blue node in $F_t$.
    In particular, this implies that each connected component of $G[V_{i+1}] = G[V(F_t)]$ either entirely consists of blue nodes in $F_t$ or entirely consists of red nodes in $F_t$. As the $(i+1)$-th bit of the identifier of each red root in $F_t$ is $0$ and the $(i+1)$-th bit of the identifier of each blue root in $F_t$ is $1$, we get that each connected component of $G[V_{i+1}]$ either contains no node in $Q_{i+1}$ with the $(i+1)$-th bit of the identifier being $0$ or no node in $Q_{i+1}$ with the $(i+1)$-th bit of the identifier being $1$, which implies that the separation invariant is preserved.
\end{proof}

\begin{claim}[Deletion Claim]
\label{claim:deletion}
It holds that $|V_{i+1}| = |V(F_t)| \geq \left(1 - \frac{1}{2b}\right)|V(F_0)| \geq |V_i| - \frac{|V|}{2b}$.
In particular, this implies that Invariant (III) is preserved.
\end{claim}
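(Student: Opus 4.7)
The plan is to bound the total number of deletions across all $t$ steps of phase $i$ by $|V(F_0)|/(2b)$, and then derive the invariant by induction on $i$.

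The key structural observation I would establish first is that \emph{each red tree causes deletions in at most one step of the phase}. Concretely, consider any red tree $T$, identified by its (red) root $r$. If $T$ declines at step $j$, then by construction every subtree $T_v$ with $v \in V_j^{propose}$ that proposed to $T$ is deleted. I would argue that after this step $T$ has no blue neighbors: any blue node $u$ in $F_j$ adjacent to $T$ has a unique ancestor $w \in V_j^{propose}$ (possibly $w=u$), and $w$ proposes to some neighboring red tree; in each of the four cases (propose to $T$ or not; accepted or declined), $u$ is either deleted or becomes red in $F_{j+1}$. Since the vertex set of $T$ does not change when $T$ declines, $T$ still has no blue neighbors afterwards, so it receives no further proposals in any future step. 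In particular, the tree rooted at $r$ has the same vertex set in $F_j, F_{j+1}, \ldots, F_t$, so the tree's size when it declines equals its final size in $F_t$.

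Next I would carry out the counting. Writing $T_r^{(j)}$ for the tree rooted at $r$ in $F_j$, the proposal rule guarantees that the total number of vertices deleted at step $j$ on account of $T_r^{(j)}$ declining is strictly less than $|V(T_r^{(j)})|/(2b)$. Using the single-decline observation, summing over all steps and all red roots yields
\[
\text{(total deletions during phase $i$)} \;<\; \sum_{r \text{ red root that declines}} \frac{|V(T_r^{(t)})|}{2b} \;\le\; \frac{|V(F_t)|}{2b} \;\le\; \frac{|V(F_0)|}{2b},
\]
where the second inequality uses the disjointness of the final red trees and the last uses $V(F_t) \subseteq V(F_0)$. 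Rearranging gives $|V(F_t)| \ge \bigl(1-\tfrac{1}{2b}\bigr)|V(F_0)|$, which is the first inequality in the claim. The second inequality follows immediately since $|V(F_0)| = |V_i| \le |V|$, so
\[
|V_{i+1}| \;\ge\; |V_i| - \frac{|V_i|}{2b} \;\ge\; |V_i| - \frac{|V|}{2b}.
\]

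Finally, for Invariant (III) I would proceed by induction on $i$. The base case $i=0$ is trivial since $V_0 = V$. For the inductive step, assuming $|V_i| \ge \bigl(1 - \tfrac{i}{2b}\bigr)|V|$, the per-phase bound just proved yields $|V_{i+1}| \ge |V_i| - |V|/(2b) \ge \bigl(1 - \tfrac{i+1}{2b}\bigr)|V|$, as desired. The only genuinely subtle point is the single-decline observation; once that is in hand, the rest is a direct arithmetic argument using the $1/(2b)$ threshold built into the proposal-acceptance rule.
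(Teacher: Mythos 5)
Your proof is correct and follows essentially the same route as the paper's: a charging argument in which every deleted node is blamed on the red tree that declined its proposal, combined with the observation that a declining tree has no blue neighbors afterwards (so it declines only once and its size when declining equals its size in $F_t$), giving a total deletion bound of $|V(F_t)|/(2b) \le |V(F_0)|/(2b)$. The only difference is presentational — you spell out the case analysis behind the single-decline observation, which the paper delegates to a remark made during the algorithm description.
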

\begin{proof}
A node $u$ got deleted in step $i$, i.e., $u \in V(F_j) \setminus V(F_{j+1})$, because of some tree $T$ in $F_j$ which decided to stop growing, as

\[\sum_{v \in V_j^{propose} \colon \text{$v$ proposes to $T$}} |V(T_v)| < \frac{|V(T)|}{2b}.\]

We blaim this tree $T$ for deleting $u$. Note that $T$ only receives blaim in step $j$ and at most $\frac{|V(T)|}{2b}$ deleted nodes blaim $T$. During the algorithm description, we observed that $T$ is not neighboring any blue node in $F_{j+1}$ and therefore $T$ is also a tree in $F_t$. Hence, each deleted node in $V(F_0) \setminus V(F_t)$ can blaim one tree $T$ in $F_t$ for being deleted in such a way that each such tree gets blaimed by at most $\frac{1}{2b}|V(T)|$ nodes, which directly proofs the claim.
\end{proof}

\begin{proof}[Proof of \cref{thm:clustering_theorem}]
The algorithm has $O(\log n)$ phases, with each phase consisting of $O(\log^2 n)$ steps. It directly follows from the ruling claim that each step can be executed in $O(\log^3 n)$ \congest rounds. Hence, we can compute $V'$ in $O(\log^6 n)$ \congest rounds, which together with the previous discussion finishes the proof of \cref{thm:clustering_theorem}. 
\end{proof}

\section{Acknowledgments}
We want to thank Mohsen Ghaffari for many valuable suggestions. 

\bibliographystyle{alpha}
\bibliography{ref}

\end{document}